\newtheorem{definition}{Definition}
\DeclareMathOperator{\Tr}{Tr}
\newcommand{\C}{\ensuremath{\mathbb{C}}}
\newcommand{\ket}[1]{\ensuremath{|#1\rangle}}
\newcommand{\bra}[1]{\ensuremath{\langle#1|}}
\newcommand{\ketbra}[2]{\ensuremath{\ket{#1}\bra{#2}}}
\newcommand{\proj}[1]{\ensuremath{\ketbra{#1}{#1}}}
\newcommand{\braket}[2]{\ensuremath{\langle{#1}|{#2}\rangle}}
\newcommand{\SPAN}{\mathrm{span}}
\newcommand{\Lrm}{\ensuremath{\mathrm{L}}}
\newcommand{\ee}{\ensuremath{\mathrm{e}}}
\newcommand{\ii}{\ensuremath{\mathrm{i}}}
\newcommand{\XX}{\mathcal{X}}
\newcommand{\YY}{\mathcal{Y}}
\newcommand{\PP}{\mathcal{P}}
\renewcommand{\1}{{\rm 1\hspace{-0.9mm}l}}
\newcommand{\ketV}[1]{\ensuremath{|#1\rangle\!\rangle}}
\newcommand{\braV}[1]{\ensuremath{\langle\!\langle#1|}}
\newcommand{\ketbraV}[2]{\ensuremath{\ketV{#1}\braV{#2}}}
\newcommand{\projV}[1]{\ensuremath{\ketbraV{#1}{#1}}}
\newtheorem{lemma}{Lemma}
\newtheorem{theorem}{Theorem}
\newtheorem{corollary}{Corollary}
\title{Discrimination of POVMs with rank-one effects}
\author{Aleksandra Krawiec$^{1,*}$ \and
{\L}ukasz Pawela$^1$ \and 
Zbigniew Pucha{\l}a$^{1,2}$}
\address{$^1$ 
Institute of Theoretical and Applied Informatics, Polish Academy of Sciences, 
ul. Ba{\l}tycka 5, 44-100 Gliwice, Poland}
\address{$^2$ Faculty of Physics, Astronomy and Applied Computer Science, 
Jagiellonian University, ul. {\L}ojasiewicza 11, 30-348 Krak{\'o}w, Poland}
\email{akrawiec@iitis.pl}
\begin{document}
\maketitle

\begin{abstract}
The main goal of this work is to provide an insight into the problem of
discrimination of positive operator valued measures with rank-one effects. It is
our intention to study multiple shot discrimination of such measurements, that
is the case when we are able to use to unknown measurement a given number of
times. Furthermore, we are interested in comparing two possible discrimination
schemes: the parallel and adaptive ones. To this end we construct a pair of
symmetric, information complete positive operator valued measures which can be
perfectly discriminated in a two-shot adaptive scheme. On top of this we provide
an explicit algorithm which allows us to find this adaptive scheme.
\end{abstract}

\section{Introduction}
The transformation of quantum states is at the core of quantum computing. The 
most 
general mathematical tool describing how quantum states transform are quantum 
channels.
While performing computation, one would like to certify that the operation in 
use agrees with the one given by the classical description.
The discrimination task is closely related to the hypothesis testing. 
This is one of the reasons why the problem of discrimination of
quantum channels has attracted a lot of attention
\cite{dariano2001using,gilchrist2005distance,
chiribella2008memory,sacchi2005optimal}. 

Various approaches have been utilized to study the distinguishability of quantum
channels. In work \cite{wang2019resource} the authors introduced the
resource theory of asymmetric distinguishability for quantum channels which was
further developed in \cite{katariya2020evaluating}. A novel scenario called
coherent quantum channel discrimination was studied in \cite{wilde2020coherent}.
Interesting results on the distinguishability concern
specific classes of quantum channels like unitary 
\cite{acin2001statistical,duan2007entanglement}
and Pauli \cite{dariano2005minimax} channels.

We can consider two approaches to channel discrimination -- a parallel and an
adaptive scheme. In the former, we are allowed to perform the unknown channel a
fixed, finite, number of times on an input state, while in the latter case we
can also perform additional operations between the applications of the unknown
channel. In general it is a nontrivial question which discrimination scheme
should be used to distinguish between two quantum channels. It seems natural
that the use of adaptive scheme should improve the discrimination. However, it
may happen that it suffices to use the parallel one. For example in the case
of unitary channels~\cite{acin2001statistical,duan2007entanglement} and von
Neumann measurements it was shown in 
\cite{ji2006identification,puchala2018strategies,puchala2018multiple}
that the parallel scheme is always optimal. It is also known that asymptotically
the use of adaptive strategy does not give an advantage over the parallel one 
for the discrimination of classical \cite{hayashi2009discrimination} and
classical-quantum channels \cite{berta2018amortized}. Nevertheless, an example
of a pair of quantum channels that cannot be distinguished perfectly in
parallel but can be distinguished by the adaptive scheme was proposed in
\cite{harrow2010adaptive}. These channels are however pretty artificial. We will
give another, natural example of two channels which have this property. These
channels are two symmetric informationally complete positive operator valued 
measures (SIC POVMs)  \cite{renes2004symmetric,flammia2006sic} of
dimension three. Moreover, we will show that there exist many pairs of channels
which can be distinguished only adaptively and present numerical results for the
discrimination of random symmetric operator valued measures (POVMs).

In this work we focus on the following scheme. There are two measurements
$\PP_1$ and $\PP_2$. One of them is secretly chosen (with probability
$\frac{1}{2}$) and put into a black box. The black box containing the unknown
measurement device can be used any finite number of times in any configuration.
Two schemes of multiple-shot discrimination will be studied: parallel and
adaptive. The parallel scheme does not require any processing between the uses
of the black box \cite{duan2016parallel}. In the adaptive scheme, however, the
processing between the uses of the black box plays a crucial role. We can
prepare any input state and put it through the discrimination network. Finally,
we perform a known measurement and make a decision which measurement was put
into the black box. 

This work is organized as follows. We begin with preliminaries in 
Section~\ref{sec:preliminaries}. The conditions allowing for perfect 
discrimination in both parallel and adaptive scenarios are stated in 
Section~\ref{sec:parallel_vs_adaptive_discr_scheme}. The example of qutrit SIC 
POVMs which can be discriminated perfectly only by the adaptive scheme, 
as well as the algorithm allowing for perfect discrimination, are 
presented in Section~\ref{sec:sic_povms}. Eventually, in 
Section~\ref{sec:numerical_results} one can find numerical results for the 
discrimination of random rank-one POVMs.

\section{Preliminaries}\label{sec:preliminaries}

Let $\XX$ and $\YY$  be finite-dimensional complex Hilbert spaces.
We denote by $\textrm{L}(\XX,\YY)$ the set of linear operators 
$A:\XX \rightarrow\YY$ and write shortly $\textrm{L}(\XX)$ for 
$\textrm{L}(\XX,\XX)$.
We will use the notation $\mathrm{U}(\XX,\YY)$ for the set of isometry 
operators and $\mathrm{U}(\XX)$ for the set of unitary operators.
A set of quantum states will be denoted $\mathcal{D} (\XX)$.
Quantum operations $\Phi: \textrm{L}(\XX) \rightarrow \textrm{L}(\YY)$ will be 
denoted $\mathrm{T} (\XX,\YY)$ while the subset of quantum channels will be 
denoted $\mathrm{C} (\XX,\YY)$.

We will need the notion of vectorization of a matrix. 
For the identity operator $\1$  of the space $\mathrm{L}(\XX)$ we define its 
vectorization as $\ketV{\1} = \sum_i \ket{ii}$. 
It can be generalized to an arbitrary matrix 
$X \in \mathrm{L}(\XX)$ as $\ketV{X} = (X \otimes \1)\ketV{\1}$.

The most general form of quantum measurements are positive operator valued 
measures (POVMs).
Formally, a collection of positive semidefinite operators   $\PP=\{E_1, 
\ldots,E_m\} \subset \mathrm{L}(\XX)$  is called a POVM iff $\sum_{i=1}^{m} E_i 
= \1$.
The operators $E_i$ are called \emph{effects}.
Every quantum measurement $\PP=\{E_1, \ldots,E_m\}$ can be identified with a 
quantum channel which gives a classical output. Its action on a quantum state 
$\rho$ can be written as
\begin{equation}
\PP (\rho) = \sum_{i=1}^{m} \Tr(E_i \rho) \proj{i}.
\end{equation}

Let us for a moment focus on the distinguishability of general quantum channels.
This problem has been widely studied in the literature in various settings and
utilizing a plethora of mathematical
tools~\cite{chiribella2008quantum,chiribella2009theoretical}.
The probability of correct discrimination between channels $\Phi, \Psi \in
\mathrm{C}(\XX,\YY)$ is upper-bounded by the Holevo-Helstrom theorem
\cite{helstrom1976quantum}
\begin{equation}
p \leq \frac{1}{2} + \frac{1}{4} \|\Phi - \Psi \|_\diamond.
\end{equation}
The diamond norm, known also as the completely bounded trace norm, $\| \cdot 
\|_\diamond $ is defined for a 
Hermiticity-preserving linear map $\Phi \in \mathrm{T}(\XX,\YY)$ as
\begin{equation}
\| \Phi \|_\diamond = \max_{\|X\|_1 = 1}\left\Vert\left(\Phi \otimes 
\1_{\mathrm{L}(\XX)} \right) (X) 
\right\Vert_1.
\end{equation}
where $\| X\|_1$ is defined as a sum of singular values of $X$.
The above-mentioned bound describes the situation when the black box containing 
the unknown POVM can be used only once.
Therefore $\Phi$ and  $\Psi$ are said to be perfectly distinguishable 
in a single-shot scenario iff 
$\|\Phi - \Psi \|_\diamond = 2$.

\subsection{Parallel discrimination}
\label{sec:lower_bound_parallel}
The approach introduced by Holevo-Helstrom theorem can be easily generalized 
into the parallel discrimination scheme. Is such a case the probability of 
correct discrimination after $N$ uses of the black box is upper-bounded by
\begin{equation}
p \leq \frac{1}{2} + \frac{1}{4} \left\Vert\Phi^{\otimes N} - \Psi^{\otimes N} 
\right\Vert_\diamond.
\end{equation}
The parallel discrimination scheme is depicted in Figure 
\ref{fig:parallel_discrimination_scheme}.
Black boxes containing the unknown POVM are denoted by question marks. 
As an input state we use the state on $N+1$ registers denoted by 
$\ket{\psi_{1, \ldots , N+1}}$. After performing the unknown POVM on $N$ 
registers we obtain classical outputs $i_1, \ldots, i_N$. 
Using this classical information we prepare  another known measurement and 
perform it on the last register. Basing on the last measurement's outcome we 
make a decision which measurement was hidden in the black box.

\begin{figure}[h]
\centering\includegraphics{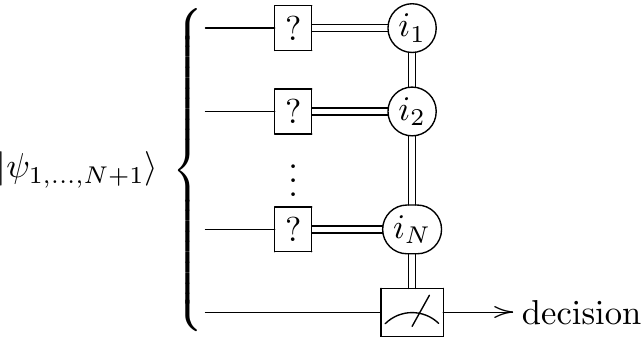}
\caption{Parallel discrimination scheme}
\label{fig:parallel_discrimination_scheme}
\end{figure}

We can state a condition when two measurements can be distinguished perfectly 
in a parallel scenario. It follows directly from the 
Theorem~\ref{th:duan_perfect_distinguishability} which will be presented in the 
subsequent section.
Let $\PP_1=\{\proj{x_1}, \ldots, \proj{x_m}\}$ and 
$\PP_2 =\{\proj{y_1}, \ldots,\proj{y_m}\}$ be POVMs for $m \geq d$.
Then POVMs $\PP_1$ and $\PP_2$ have perfect parallel 
distinguishability if there exists an integer $N>0$ and a quantum state 
$\rho \in \mathcal{D}(\XX^{\otimes N})$ 
such that for all 
multi-indices $i_1, \ldots, i_N$ we have
\begin{equation}
\bra{x_{i_1}  \ldots  x_{i_{N}}} \rho 
\ket{y_{i_1}  \ldots  y_{i_{N}}} = 0. 
\end{equation}

\subsection{Adaptive discrimination scheme}
If we are allowed to perform processing between the uses of black boxes, then 
we arrive at the adaptive discrimination scheme which is depicted in 
Figure \ref{fig:adaptive_discrimination_scheme}.
Although the adaptive discrimination scheme can be used for any number of uses, 
$N$, of the black box, for the sake of clarity let us focus on the case $N=3$. 
We prepare an input state $\ket{\psi}$ and perform the unknown POVM on 
the first register. Basing on the measurement's outcome $i_1$ we perform a 
quantum channel $\Phi_{i_1}$ on the other registers.
Later, we perform the unknown POVM on the second register and obtain the 
classical outcome $i_2$. 
Then we perform a channel $\Phi_{i_1, i_2}$ on the remaining registers which 
now depends on both $i_1$ and $i_2$.
In what follows we perform the unknown POVM on the third register and obtain 
the outcome $i_3$. 
Finally, we perform the known measurement on the last register which now 
depends on all classical outcomes $i_1, i_2, i_3$ and make a decision which of 
the POVMs was hidden in the black box.

\begin{figure}[h]
\centering\includegraphics{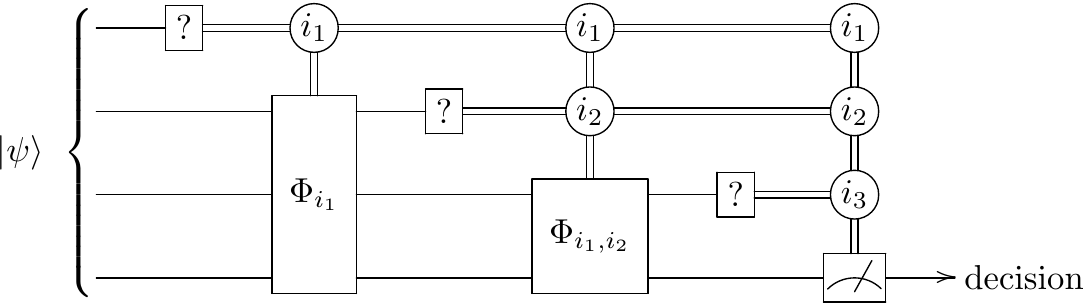}
\caption{Adaptive discrimination scheme}
\label{fig:adaptive_discrimination_scheme}
\end{figure}

\section{Parallel vs adaptive discrimination 
scheme}\label{sec:parallel_vs_adaptive_discr_scheme}
In this section we will state the condition when a pair of quantum operations 
can be distinguished perfectly in the adaptive scheme while the parallel scheme 
does not allow for perfect discrimination. 
Before reviewing some known results on the distinguishability of general 
quantum operations, we need to introduce the notion of disjointness between 
quantum operations.
Let $\rho \in \mathcal{D}(\XX)$ be a quantum state with spectral decomposition 
$\rho = \sum_i 
\lambda_i \proj{\lambda_i}$. Then we define its support as
$\textrm{supp} (\rho) = \textrm{span} \{ \ket{\lambda_i}: \lambda_i > 0 \}$.
\begin{definition}
Two quantum states $\rho_1, \rho_2 \in \mathcal{D}(\XX)$ are called disjoint 
when 
\begin{equation}
\textrm{supp} (\rho_1)  \cap \textrm{supp} (\rho_2) = \{0\}.
\end{equation}
\end{definition}
The notion of disjointness can be generalized to quantum operations. Thus, 
following \cite{duan2009perfect}, we 
have the following definition.
\begin{definition}
Two quantum operations $\Phi, \Psi \in \mathrm{T}(\XX,\YY)$ are called 
(entanglement-assisted) disjoint if there exists an input state $\ket{\psi} \in 
\XX \otimes \mathcal{Z}$ such that 
$\left(\Phi \otimes \1_{\mathrm{L}(\mathcal{Z})}\right)(\proj{\psi})$ and  
$\left(\Psi \otimes \1_{\mathrm{L}{(\mathcal{Z})}}\right)(\proj{\psi})$
are disjoint.
\end{definition}
Let  $\Phi, \Psi \in \mathrm{T}(\XX,\YY)$ be quantum operations  
with Kraus operators $\{E_i\}_i$ and $\{F_i\}_i$ respectively.
From \cite{duan2009perfect} we have a condition for perfect 
distinguishability by a finite number of uses given by the following theorem.

\begin{theorem}\label{th:duan_perfect_distinguishability}
Quantum operations 
$\Phi:\rho \mapsto \sum_i E_i \rho E_i^\dagger $
and $\Psi: \rho \mapsto \sum_j F_j \rho F_j^\dagger$
are perfectly 
distinguishable by a finite 
number of uses iff they are disjoint and 
$\1 \not\in \mathrm{span}\{E_i^\dagger F_j\}_{i,j}$. 
\end{theorem}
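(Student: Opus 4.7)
My plan is to prove the two implications separately, relating perfect discrimination in finitely many uses to the vanishing of a suitable overlap functional between the two output branches.

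For the necessity direction, I would first observe that any $N$-use protocol that perfectly discriminates $\Phi$ from $\Psi$ produces final output states with orthogonal, hence disjoint, supports. An inductive argument on $N$ then shows that the one-use outputs (with a suitable ancilla) must already be disjoint: if every single-use input left a common vector $\ket{v}$ in the supports of the $\Phi$- and $\Psi$-branches, then the subsequent $N-1$ rounds of discrimination reduce to the original problem with input $\ket{v}$, and CPTP processing cannot remove this shared component. For the span condition, suppose $\1 = \sum_{ij} c_{ij} E_i^\dagger F_j$. A Stinespring dilation argument then yields a uniform positive lower bound on the overlap of canonical purifications of $\Phi(\rho)$ and $\Psi(\rho)$, with the constant depending only on the $c_{ij}$. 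Because Kraus operators of $\Phi^{\otimes N}$ and $\Psi^{\otimes N}$ are tensor products of $E_i$ and $F_j$, an analogous identity lifts to any finite number of parallel uses, and with trace-preserving intermediate processing to the adaptive setting; this prevents the overlap from ever vanishing, contradicting perfect discrimination.

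For the sufficiency direction, I would proceed constructively. Disjointness supplies an input $\ket{\psi} \in \XX \otimes \mathcal{Z}$ whose single-use outputs have trivial intersection of supports in $\YY \otimes \mathcal{Z}$; a known projective processing step then separates an immediately distinguishable component from an ``ambiguous'' subspace on which the two branches still coincide. Restricting attention to this subspace produces a new pair of operations on a strictly smaller ambient space, and I would argue that the hypotheses \emph{disjointness} and \emph{identity not in the span of $E_i^\dagger F_j$} are both inherited by this restriction. Iterating, the ambiguous subspace strictly shrinks after each round and is exhausted in finitely many steps, leaving perfectly orthogonal outputs on which a final known measurement decides with certainty.

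The main obstacle, as I see it, is the inheritance step in sufficiency: translating the algebraic hypothesis $\1 \not\in \mathrm{span}\{E_i^\dagger F_j\}_{i,j}$ into a per-round certificate that the restricted Kraus operators still satisfy the analogous condition on the shrinking ambiguous subspace. This is what underwrites finite rather than merely asymptotic termination, and it is where I expect the most care will be required.
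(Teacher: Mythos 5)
First, a point of reference: the paper does not prove this theorem --- it is quoted from Duan, Feng and Ying \cite{duan2009perfect} --- so the comparison below is against that proof, whose sufficiency construction the paper does reproduce concretely in its Section~4 algorithm for the SIC-POVM example.

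Your necessity direction is essentially the standard argument and is sound: a common support vector of the one-use outputs survives every subsequent CPTP step, and if $\1=\sum_{ij}c_{ij}E_i^\dagger F_j$ then, since $\sum_k G_k^\dagger G_k=\1$ for every intermediate channel, the same identity holds for the Kraus operators $K_a,L_b$ of the total $N$-use processes, whence $1=\langle\psi|\psi\rangle=\sum_{ab}\tilde c_{ab}\bra{\psi}K_a^\dagger L_b\ket{\psi}$ and the final outputs cannot be orthogonal. One correction: the overlap bound you extract this way degrades like $\bigl(\sum_{ij}|c_{ij}|\bigr)^{-N}$, so your claim of a \emph{uniform} lower bound independent of $N$ is false; positivity at each finite $N$ is all you need and all you get.

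The sufficiency direction, however, has a genuine gap, and it is not merely the ``inheritance step'' you flag. Disjoint supports do not give you a projective step that isolates an ``ambiguous subspace on which the two branches coincide'': two states with trivially intersecting but non-orthogonal supports coincide on no subspace, the natural unambiguous-discrimination measurement leaves an inconclusive branch whose conditional states are simply new non-orthogonal states, and nothing forces the dimension of anything to decrease strictly per round --- so your induction has no terminating quantity. Moreover your sketch never actually uses the hypothesis $\1\notin\mathrm{span}\{E_i^\dagger F_j\}$, which is precisely what makes finite termination possible. The correct mechanism is constructive and different: $\1\notin M:=\mathrm{span}\{E_i^\dagger F_j\}$ is equivalent to the existence of $X\in M^{\perp}$ with $\Tr X\neq 0$; from its singular value decomposition $X=U\Sigma V^\dagger$ one builds two purification-like target vectors $\ket{\xi}\propto\ketV{U\sqrt{\Sigma}}$, $\ket{\eta}\propto\ketV{V\sqrt{\Sigma}}$ with $\Tr_{\mathcal{Z}}\ketbra{\xi}{\eta}\propto X$ and $\braket{\xi}{\eta}\propto\overline{\Tr X}\neq 0$. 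Disjointness is then used to \emph{steer}: outputs with disjoint supports can be mapped by a single CPTP map onto the pure pair $(\ket{\xi},\ket{\eta})$ provided their fidelity does not exceed $|\braket{\xi}{\eta}|$, and this overlap condition is secured by first running enough uses in parallel on the disjointness-witnessing input so that the fidelity of the two branches decays below that threshold. A final use of the black box then yields exactly orthogonal outputs because $\bra{\xi}(E_i^\dagger F_j\otimes\1)\ket{\eta}\propto\Tr(E_i^\dagger F_j X^\dagger)=0$ for all $i,j$. Without the operator $X$ and the steering lemma for disjoint states, your outline cannot be completed.
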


The following corollary comes from \cite{duan2016parallel} and gives a 
condition when quantum operations cannot be distinguished in parallel.

\begin{corollary}\label{cor:Duan_parallel_distinguishability}
Using notation as in Theorem~\ref{th:duan_perfect_distinguishability}, if 
$\mathrm{span} \{E_i^\dagger F_j\}_{i,j}$ contains a positive operator 
$\rho >0 $, then 
$\Phi$ and $\Psi$ cannot be distinguished perfectly in parallel by any finite 
number of uses.
\end{corollary}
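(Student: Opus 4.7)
My plan is to reduce perfect parallel discrimination with $N$ uses to perfect single-shot discrimination of the tensor-product operations $\Phi^{\otimes N}$ and $\Psi^{\otimes N}$, and then show that a strictly positive operator in the Kraus-product span $V \eqdef \mathrm{span}\{E_i^\dagger F_j\}_{i,j}$ automatically produces a strictly positive operator in the corresponding span for the tensor channels, which obstructs discrimination.

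The first step is to recognize that an $N$-use parallel discrimination of $\Phi$ versus $\Psi$ is, by definition, a single-shot discrimination of $\Phi^{\otimes N}$ versus $\Psi^{\otimes N}$ (with some ancilla $\mathcal{Z}$). The Kraus operators of these tensor operations are $E_{i_1}\otimes\cdots\otimes E_{i_N}$ and $F_{j_1}\otimes\cdots\otimes F_{j_N}$, and using the standard identity $\mathrm{span}\{A_1\otimes\cdots\otimes A_N:A_k\in S\}=S^{\otimes N}$, the corresponding Kraus-product span equals $V^{\otimes N}$.

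The second step is a Hilbert--Schmidt characterization of single-shot perfect discrimination. The outputs $(\Phi\otimes\1)(\proj{\psi})$ and $(\Psi\otimes\1)(\proj{\psi})$ have orthogonal supports iff their Hilbert--Schmidt inner product vanishes, i.e.\ iff
\begin{equation}
\Tr\bigl[(\Phi\otimes\1)(\proj{\psi})\,(\Psi\otimes\1)(\proj{\psi})\bigr]
 = \sum_{i,j}\bigl|\bra{\psi}(E_i^\dagger F_j \otimes \1)\ket{\psi}\bigr|^2 = 0,
\end{equation}
which I would obtain by expanding both channels in Kraus form and using cyclicity of the trace. Consequently, perfect single-shot distinguishability is equivalent to the existence of an input $\ket{\psi}$ such that $\bra{\psi}(A\otimes\1)\ket{\psi}=0$ for every $A$ in the Kraus-product span.

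The final step is to apply this criterion to $\Phi^{\otimes N}$ versus $\Psi^{\otimes N}$: perfect parallel discrimination would demand a nonzero vector $\ket{\psi}$ annihilating, in the above sesquilinear sense, every element of $V^{\otimes N}$. But if $\rho>0$ lies in $V$, then $\rho^{\otimes N}\in V^{\otimes N}$ and $\rho^{\otimes N}\otimes\1$ is strictly positive, so $\bra{\psi}(\rho^{\otimes N}\otimes\1)\ket{\psi}>0$ for every nonzero $\ket{\psi}$, a contradiction for all $N$. The only point that needs a little care is the passage from $\rho>0$ to $\rho^{\otimes N}\otimes\1>0$, which is immediate from the fact that the tensor product of strictly positive operators is strictly positive; the rest is a routine combination of the single-shot orthogonality criterion with the tensor structure of parallel uses.
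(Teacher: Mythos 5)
Your proof is correct. The paper itself gives no proof of this corollary --- it simply imports it from the cited reference \cite{duan2016parallel} --- and your argument is precisely the standard one underlying that result: reduce $N$-use parallel discrimination to single-shot discrimination of $\Phi^{\otimes N}$ versus $\Psi^{\otimes N}$, characterize perfect single-shot discrimination by the vanishing of $\bra{\psi}(E_i^\dagger F_j \otimes \1)\ket{\psi}$ for all $i,j$ (equivalently, for all elements of the Kraus-product span), and observe that $\rho^{\otimes N}$ is a strictly positive element of the $N$-fold span, which no nonzero $\ket{\psi}$ can annihilate. The only point worth making explicit is that one may restrict to pure inputs, which follows since the diamond norm is attained on pure states (or by noting that orthogonality of outputs for a mixed input forces orthogonality for each pure component); with that remark the argument is complete.
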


Combining the above we obtain the corollary which states a condition when 
operations can be distinguished only by the adaptive scheme, that is when 
parallel discrimination is not sufficient but adaptive discrimination allows 
for perfect discrimination.  
\begin{corollary}
Using notation as in Theorem~\ref{th:duan_perfect_distinguishability}, if 
$\mathrm{span} \{E_i^\dagger F_j\}_{i,j}$ contains a positive 
operator $\rho > 0$,  operations $\Phi$ and $\Psi$ are (entanglement-assisted) 
disjoint and 
$\1 \not\in \mathrm{span}\{E_i^\dagger F_j\}_{i,j}$, then 
$\Phi$ and $\Psi$ can be distinguished perfectly only by the adaptive scheme.
\end{corollary}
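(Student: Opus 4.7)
The plan is essentially to observe that the statement is an immediate logical conjunction of the two preceding results, so the work reduces to applying each to the appropriate pair of hypotheses and then arguing that ``distinguishable perfectly by a finite number of uses'' together with ``not distinguishable in parallel by any finite number of uses'' is precisely the meaning of ``distinguishable only by the adaptive scheme.''

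First, I would invoke Theorem~\ref{th:duan_perfect_distinguishability}. Two of its hypotheses are stated verbatim in our corollary: the operations $\Phi$ and $\Psi$ are (entanglement-assisted) disjoint, and $\1 \notin \mathrm{span}\{E_i^\dagger F_j\}_{i,j}$. The theorem therefore yields that $\Phi$ and $\Psi$ are perfectly distinguishable by some finite number of uses of the black box. Since the adaptive scheme subsumes every finite-use discrimination protocol (including, in particular, the one guaranteed by Duan, Feng and Ying's construction), this already gives perfect adaptive distinguishability.

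Second, I would invoke Corollary~\ref{cor:Duan_parallel_distinguishability}. Its single hypothesis --- that $\mathrm{span}\{E_i^\dagger F_j\}_{i,j}$ contains a (strictly) positive operator $\rho>0$ --- is again assumed in our corollary. It follows that no parallel protocol, regardless of the number of uses $N$ of the black box, can achieve perfect discrimination, i.e.\ $\|\Phi^{\otimes N}-\Psi^{\otimes N}\|_\diamond<2$ for every $N$.

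Combining the two conclusions, the pair $(\Phi,\Psi)$ admits a perfect finite-shot adaptive discrimination strategy while being forbidden from perfect parallel discrimination at any finite shot number; in other words, perfect discrimination is possible only through an adaptive scheme. The proof is thus a one-line combination and there is really no technical obstacle: the only point worth making explicit is the observation that the two cited results refer to the \emph{same} linear span $\mathrm{span}\{E_i^\dagger F_j\}_{i,j}$, so that the simultaneous presence of a positive element $\rho>0$ and the absence of $\1$ from this span is a self-consistent set of hypotheses (indeed, $\rho$ need not be proportional to $\1$, so there is no contradiction).
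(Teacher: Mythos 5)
Your proposal is correct and matches the paper's own reasoning: the paper presents this corollary as an immediate combination of Theorem~\ref{th:duan_perfect_distinguishability} (giving perfect finite-use, hence adaptive, distinguishability from disjointness and $\1 \notin \mathrm{span}\{E_i^\dagger F_j\}_{i,j}$) with Corollary~\ref{cor:Duan_parallel_distinguishability} (ruling out perfect parallel discrimination when the span contains a positive operator). Nothing is missing.
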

Let $\PP_1$ and $\PP_2$ be rank-one POVMs with effects 
$\{\proj{x_i}\}_i$ and $\{\proj{y_i}\}_i$ respectively.
Therefore the Kraus operators of $\PP_1$ and $\PP_2$ can be chosen as
$E_i = \{ \ketbra{i}{x_i}\}_i$ and $F_i = \{ \ketbra{i}{y_i}\}_i$ respectively.
Hence in order to check whether 
$\1 \notin  \textrm{span} \{E_i^\dagger F_j\}_{i,j}$ 
it suffices to see whether 
$\1 \notin  \textrm{span} \{ \ketbra{x_i}{y_i}\}_i$ as
$E_i^\dagger F_j = \ketbra{x_i}{y_j} \delta_{ij}$.

\section{SIC POVMs}\label{sec:sic_povms}
In this section we give an example of a pair of quantum measurements 
for which the use of parallel discrimination scheme does not 
give perfect distinguishability but they can be distinguished perfectly by the 
adaptive scheme.
Let us begin with a definition.
A POVM  $\PP=\{E_1, \ldots, E_{d^2}  \} \subset \mathrm{L}(\XX_d)$ of dimension 
$d$ is called a 
SIC POVM  if $E_i = \frac{1}{d} \proj{\phi_i}$ and 
\begin{equation}\label{eq:sic_property}
\left|  \braket{\phi_i}{\phi_j}    \right|^2 = \frac{1}{d+1}  
\end{equation}
for every $i,j = 1, \ldots, d^2$, $i\neq j$.
It is worth mentioning here that it is an open question whether SIC POVMs exist 
in all dimensions. So far numerical results are known for all dimensions up to 
$151$ and for a few other dimensions up to $844$ \cite{fuchs2017sic}.

In what follows we will construct an example of a pair of quantum channels 
which cannot be distinguished perfectly in the parallel scheme but can be 
distinguished perfectly in the adaptive scheme.
Let $\PP_1$ and $\PP_2$ be qutrit SIC POVMs with effects 
$\{\proj{x_i}\}_i$, where $\ket{x_i}=\frac{1}{\sqrt{3}}\ket{\phi_i}$, and 
$\{\proj{y_i}\}_i$ respectively.
Let $\pi = (9, 8, 7, 3, 1, 2, 6, 4, 5 )$ be a permutation such that $\ket{y_i} 
= \ket{x_{\pi(i)}}$ for 
$ i=1, \ldots, 9$.
A detailed construction of the  POVM $\{\proj{x_i}\}_i$ can be found in 
\cite{grassl2017fibonacci} and the exact form of the fiducial vector is stated 
in the supplementary material of \cite{scott2010symmetric}.
In this case $\1 \not\in \mathrm{span}\{\ketbra{x_i}{y_i}\}_{i}$  and 
$\PP_1, \PP_2 \in \textrm{C}(\XX,\YY)$ are (entanglement-assisted) disjoint. 
To see the latter define two states
\begin{equation}
\begin{split}
\rho_1 = (\PP_1 \otimes \1 ) \left(\frac{1}{d}\projV{\1}\right)  = 
\frac{1}{d}\sum_i 
\proj{i} \otimes \proj{x_i}; \\
\rho_2 = (\PP_2 \otimes \1 ) \left(\frac{1}{d}\projV{\1}\right) = 
\frac{1}{d}\sum_i 
\proj{i} \otimes \proj{y_i}. 
\end{split}
\end{equation}
We need to check whether 
$\textrm{supp}(\rho_1) \cap \textrm{supp}(\rho_2) = \{0\}$. We have
\begin{equation}
\begin{split}
\textmd{supp}(\rho_1) = \textrm{span} \{\ket{i} \otimes \ket{x_i}\}_i; \\
\textmd{supp}(\rho_2) = \textrm{span} \{\ket{i} \otimes \ket{y_i}\}_i.
\end{split}
\end{equation}
Take $\ket{\varphi} \in \textrm{supp}(\rho_1) \cap \textrm{supp}(\rho_2)$, then 
there exist numbers $\alpha_1, \ldots \alpha_m$ and $\beta_1, \ldots \beta_m$ 
such that
\begin{equation}
\ket{\varphi} = \sum_i \alpha_i \ket{i} \otimes \ket{x_i} = 
\sum_i \beta_i \ket{i} \otimes \ket{y_i}.
\end{equation}
As $\ket{x_i}$ and $\ket{y_i}$ are 
linearly independent for all $i = 1, \ldots, k$, then $\ket{\varphi} = 0$.

Therefore from Theorem \ref{th:duan_perfect_distinguishability} we obtain that
$\PP_1$ and $\PP_2$ are perfectly distinguishable by a finite number of uses.
Non-existence of the parallel scheme follows from Corollary
\ref{cor:Duan_parallel_distinguishability}, that is  $\PP_1$ and $\PP_2$ cannot
be distinguished in the parallel scheme as there exists a quantum state $\rho
\in \mathcal{D}(\XX)$ such that $\rho \in
\mathrm{span}\{\ketbra{x_i}{y_i}\}_{i}$. The algorithm finding the state $\rho$
will be presented in Section ~\ref{sec:numerical_results}.

\begin{figure}[h]
\centering\includegraphics{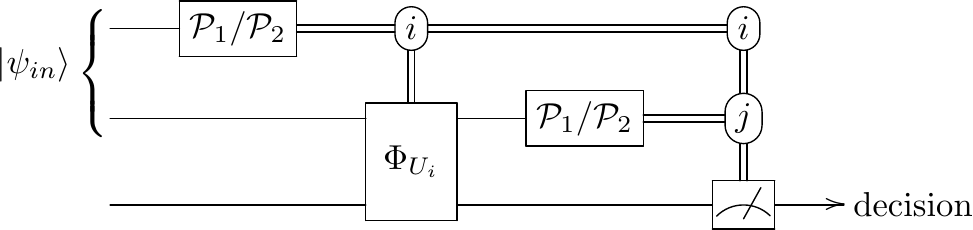}
\caption{Adaptive discrimination scheme}
\label{fig:adaptive_scheme_two_uses}
\end{figure}

Now we focus on the adaptive scheme for perfect 
discrimination between $\PP_1$ and $\PP_2$ which is depicted in Figure 
\ref{fig:adaptive_scheme_two_uses}.
The top register corresponds to the space $\XX$ while the second and third 
registers correspond to spaces $\YY$ and $\mathcal{Z}$ respectively where each 
of these spaces is $\C^3$. 
Our goal is as follows.
Before the final use of the black box we would like to  obtain states 
$\ket{\xi_i}$ and $\ket{\eta_i}$ for which 
$ \left(\PP_1 \otimes \1\right) \left( \proj{\xi_i}\right)$
and
$ \left(\PP_2 \otimes \1\right) \left( \proj{\eta_i}\right)$
are orthogonal.
Therefore we need to find processing $\Phi_i$ for which
\begin{equation}
\begin{split}
\sum_i \proj{i} \otimes \proj{\xi_i}
&= \sum_i \proj{i} \otimes \Phi_i 
\big(\left( \PP_1 \otimes \1 
\right)(\proj{\psi_{in}}) 
\big) \\
\sum_i \proj{i} \otimes \proj{\eta_i}
&= \sum_i \proj{i} \otimes \Phi_i 
\big(\left( \PP_2 \otimes \1 
\right)(\proj{\psi_{in}}) 
\big)
\end{split}
\end{equation}
where $\ket{\psi_{in}}$ is the input state for the discrimination procedure.

The scheme for discrimination between  $\PP_1$ and $\PP_2$ is as follows.
First, using the fact that 
$\textrm{rank} \left(\mathrm{span}\{\ketbra{x_i}{y_i}\}_{i}\right) < 9$,  
we can find a nonzero matrix $A \in \textrm{L}(\XX)$ such that 
\begin{equation*}
A \perp \mathrm{span}\{\ketbra{x_i}{y_i}\}_{i}.
\end{equation*}
From the singular value decomposition we can write
$A = U\Sigma V^\dagger$. 
Next, find vectors 
$\tilde{\ket{\xi}}, \tilde{\ket{\eta}} \in \YY \otimes \mathcal{Z}$ as 
$\tilde{\ket{\xi}} = \ketV{U \sqrt{\Sigma}}$ and
$\tilde{\ket{\eta}} = \ketV{V \sqrt{\Sigma}}$.
We can see that they fulfill the condition
\begin{equation*}
\Tr_\mathcal{Z} \tilde{\ket{\xi}} \tilde{\bra{\eta}} = A.
\end{equation*}
Let $\ket{\xi} = \frac{\tilde{\ket{\xi}}}{\left\Vert \tilde{\ket{\xi}} 
\right\Vert}$ and
$\ket{\eta} = \frac{\tilde{\ket{\eta}}}{\left\Vert \tilde{\ket{\eta}} 
\right\Vert}$.

Now we present the algorithm for the adaptive discrimination between $\PP_1$ 
and $\PP_2$.
\begin{enumerate}
\item As the input state to the discrimination we use the maximally entangled 
state on
first two registers, that is $\ket{\psi_{in}} =
\frac{1}{\sqrt{d}}\ketV{\1_\XX}$. Then we perform the unknown measurement on the
first register, either $\PP_1$ or $\PP_2$, and obtain a label $i$. At the same
time on the second register we obtain the state either $\ket{\psi_i}$ or
$\ket{\varphi_i} \in \YY$. Using the property in Eq. \eqref{eq:sic_property} 
one can 
note that  
\begin{equation*}
\left| \braket{\psi_i}{\varphi_i} \right|  = \left| \Tr (A) \right|^2 = 
\frac{1}{2}.
\end{equation*}

\item Basing on the label $i$ we perform an isometric channel 
$\Phi_{U_i}(\cdot) = U_i\cdot U_i^\dagger$ for  
$ U_i \in \textrm{U}(\YY, \YY \otimes \mathcal{Z})$
such that
\begin{equation*}
\begin{split}
U_i \ket{\psi_i} &= \ee^{\ii \theta_i}\ket{\xi} \\
U_i \ket{\varphi_i} &= \ket{\eta}
\end{split}
\end{equation*}
where $\theta_i = \beta - \alpha_i$ for 
$\braket{\psi_i}{\varphi_i} = r_i\ee^{\ii \alpha_i} $ and
$\braket{\xi}{\eta} = r\ee^{\ii \beta}$.

\item After performing the isometric channel we have a new third register 
$\mathcal{Z}$. 
Then we perform the unknown measurement on the second 
register $\YY$ and obtain a label $j$. 

\item Finally, we measure the third register by some known measurement and make 
a decision whether in the black box there was either $\PP_1$ or $\PP_2$.
\end{enumerate}

\section{Numerical results}\label{sec:numerical_results}
In this section we study numerically the distinguishability of 
random POVMs with rank-one effects.
We check the conditions for perfect distinguishability.
To choose at random a POVM of dimension $d$ with $m$ rank-one
effects we take a Haar random isometry matrix of dimension $d \times m$. 
Then we take projectors onto the columns $\{\ket{x_i}\}_{i=1}^m$ of this matrix 
and obtain a POVM with effects $\{\proj{x_i}\}_{i=1}^m$.

Let $\PP_1$ and $\PP_2$ be POVMs with effects
$\{\proj{x_i}\}_{i=1}^m$ and $\{\proj{y_i}\}_{i=1}^m$  respectively.
If $m=d^2$, then $\1 \in \mathrm{span} \{ \ketbra{x_i}{y_i} \}$  
and therefore any two random rank-one POVMs cannot be distinguished perfectly 
by any finite number of uses.
If $d \leq m < d^2$, then $\1 \not\in \mathrm{span} \{ \ketbra{x_i}{y_i} \}$.
Hence there exists a finite number $N$ of uses which allows for 
perfect distinguishability iff $\PP_1$ and $\PP_2$ are (entanglement-assisted) 
disjoint.
Fortunately, we have the following lemma.

\begin{lemma}
Consider two independently chosen at random POVMs with rank-one effects.
Then with probability one they are (entanglement-assisted) disjoint.
\end{lemma}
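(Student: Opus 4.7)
The plan is to imitate the argument used for SIC POVMs in Section~\ref{sec:sic_povms}: take the maximally entangled state as the auxiliary input and show that the resulting classical-quantum output states have trivial support intersection with probability one.

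First, I would set $\ket{\psi_{in}} = \frac{1}{\sqrt{d}}\ketV{\1}$ and consider
\begin{equation*}
\rho_1 = (\PP_1 \otimes \1)(\proj{\psi_{in}}) = \frac{1}{d}\sum_{i=1}^m \proj{i}\otimes\proj{x_i},
\qquad
\rho_2 = (\PP_2 \otimes \1)(\proj{\psi_{in}}) = \frac{1}{d}\sum_{i=1}^m \proj{i}\otimes\proj{y_i},
\end{equation*}
exactly as in the SIC calculation. Their supports are $\SPAN\{\ket{i}\otimes\ket{x_i}\}_{i=1}^m$ and $\SPAN\{\ket{i}\otimes\ket{y_i}\}_{i=1}^m$ respectively. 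Any $\ket{\varphi}$ in the intersection admits expansions $\sum_i \alpha_i \ket{i}\otimes\ket{x_i} = \sum_i \beta_i \ket{i}\otimes\ket{y_i}$; projecting onto $\bra{i}$ in the first register yields $\alpha_i \ket{x_i} = \beta_i \ket{y_i}$ for every $i = 1, \dots, m$.

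Next, I would argue that with probability one $\ket{x_i}$ and $\ket{y_i}$ are linearly independent for each $i$. Since $\PP_1$ and $\PP_2$ are chosen independently, the vectors $\ket{x_i}$ and $\ket{y_i}$ are independent random elements of $\C^d$. The marginal law of a single column of a Haar-distributed $d\times m$ co-isometry is unitarily invariant and absolutely continuous on the closed unit ball of $\C^d$ (for $d\geq 2$; the case $d=1$ is trivial). Therefore, conditionally on $\ket{y_i}$, the one-complex-dimensional subset $\{\lambda\ket{y_i} : \lambda \in \C\}$ has measure zero under the law of $\ket{x_i}$. A finite union bound over $i=1,\dots,m$ gives that, with probability one, no pair $(\ket{x_i},\ket{y_i})$ is proportional, so $\alpha_i = \beta_i = 0$ for every $i$. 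Hence $\ket{\varphi}=0$, meaning $\rho_1$ and $\rho_2$ are disjoint and, by definition, $\PP_1$ and $\PP_2$ are entanglement-assisted disjoint.

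The only step that genuinely requires justification is the absolute continuity of the marginal distribution of a single column of the Haar isometry. This can be argued abstractly from the absolute continuity of the Haar measure on the complex Stiefel manifold with respect to its natural volume form together with the smoothness of the column-projection map, or more concretely via the QR/Gram--Schmidt construction of Haar isometries from matrices of i.i.d.\ complex Gaussian entries, which reduces the claim to the (obvious) absolute continuity of a complex Gaussian. Everything else in the proof is a direct transcription of the computation already carried out in Section~\ref{sec:sic_povms}.
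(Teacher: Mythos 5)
Your proof is correct and takes essentially the same route as the paper: feed in the maximally entangled state, identify the supports of $\rho_1$ and $\rho_2$ as $\SPAN\{\ket{i}\otimes\ket{x_i}\}_i$ and $\SPAN\{\ket{i}\otimes\ket{y_i}\}_i$, and reduce disjointness to the almost-sure pairwise linear independence of $\ket{x_i}$ and $\ket{y_i}$. The only difference is that you actually justify that probabilistic step (absolute continuity of the column marginals plus a union bound), whereas the paper simply asserts it.
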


\begin{proof}
Define two states
\begin{equation}
\begin{split}
\rho_1 = (\PP_1 \otimes \1 ) \left(\frac{1}{d}\projV{\1}\right)  = 
\frac{1}{d}\sum_i 
\proj{i} \otimes \proj{x_i}; \\
\rho_2 = (\PP_2 \otimes \1 ) \left(\frac{1}{d}\projV{\1}\right) = 
\frac{1}{d}\sum_i 
\proj{i} \otimes \proj{y_i}. 
\end{split}
\end{equation}
We need to check whether 
$\textrm{supp}(\rho_1) \cap \textrm{supp}(\rho_2) = \{0\}$. We have
\begin{equation}
\begin{split}
\textmd{supp}(\rho_1) = \textrm{span} \{\ket{i} \otimes \ket{x_i}\}_i; \\
\textmd{supp}(\rho_2) = \textrm{span} \{\ket{i} \otimes \ket{y_i}\}_i.
\end{split}
\end{equation}
Take $\ket{\varphi} \in \textrm{supp}(\rho_1) \cap \textrm{supp}(\rho_2)$, then 
there exist $\alpha_1, \ldots \alpha_m$ and $\beta_1, \ldots \beta_m$ such that
\begin{equation}
\ket{\varphi} = \sum_i \alpha_i \ket{i} \otimes \ket{x_i} = 
\sum_i \beta_i \ket{i} \otimes \ket{y_i}.
\end{equation}
As it holds that for random POVMs $\ket{x_i}$ and $\ket{y_i}$ are 
linearly independent with probability one for all $i = 1, \ldots, k$, then 
$\ket{\varphi} = 0$.
\end{proof}
arxiv
Knowing that for $m< d^2$ two POVMs with rank-one effects can always be 
distinguished perfectly by a 
finite number of uses, now we are interested when it suffices to use the 
parallel scheme, instead of the adaptive one.
From Theorem \ref{cor:Duan_parallel_distinguishability} we know that if there 
exists a positive operator $\rho \in \mathrm{span} (\ketbra{x_i}{y_i})$, then 
we need to use the adaptive scheme. 
Simple numerical calculations show that there exist pairs of qubit POVMs with 
$3$ rank-one effects which can be distinguished perfectly only by the adaptive 
scheme. 
Moreover, the probability of finding a pair of POVMs with this property is 
roughly equal to $\frac{4}{10}$.

Now we proceed to studying the distinguishability of randomly chosen rank-one 
POVMs in higher dimensions depending on the number of effects.
We study this problem numerically for the input dimension $d=7$ and the number
of effects $d < m < d^2$. The procedure is as follows. We sample $N=10^6$ pairs
of random POVMs in a manner explained at the beginning of this section. For
further details on sampling we refer the reader
to~\cite{Gawron2018,heinosaari2019random}. 

\begin{figure}[h]
\centering\includegraphics{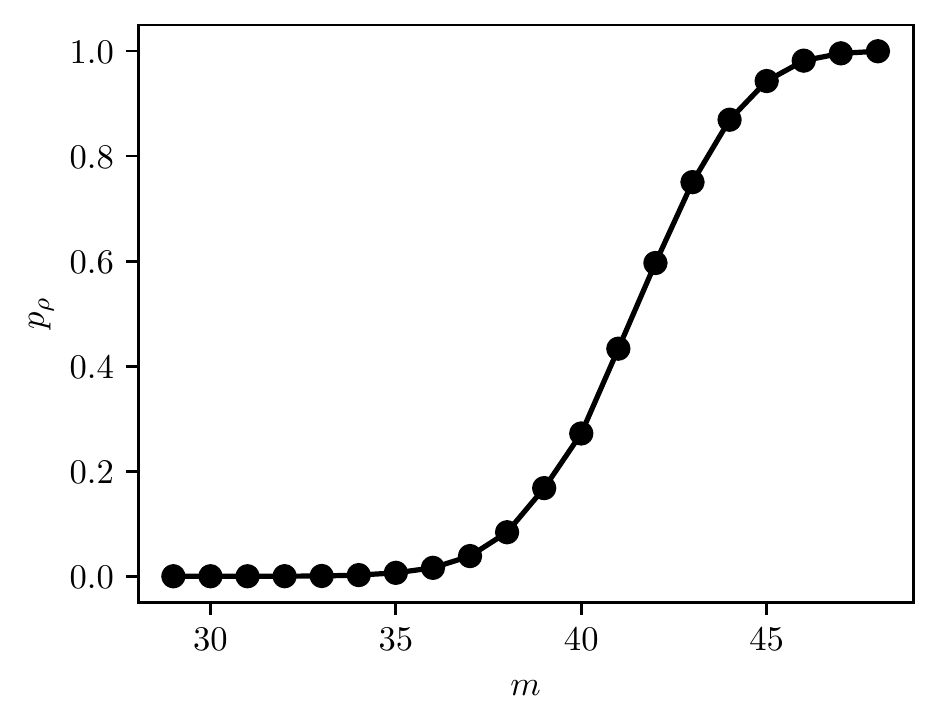}
\caption{
Numerically estimated lower bound for the probability $p_\rho$ of an event that 
for two randomly sampled POVMs with rank-one effects the adaptive scenario is 
necessary for achieving perfect discrimination.
The probability is plotted as a function  
of the number of effects $m$ for the dimension $d=7$. The number of samples per 
point is $N=10^6$.}\label{fig:prob-parallel}
\end{figure}

In order to check whether the
condition stated in Corollary~\ref{cor:Duan_parallel_distinguishability} is
fulfilled, we introduce an iterative algorithm. First, we construct a projection
operator $P$ on the space $\SPAN\{\ketbra{x_i}{y_i}\}_{i=1}^m$ and choose an
initial operator $X \in \Lrm(\XX_d)$. We project $X$ onto the subspace given by
$P$ obtaining some operator $Y$. We substitute the operator $X$ with the density
operator closest to $Y$ and repeat this procedure until it converges or a
predefined number of steps is reached. The final step is to check whether the
resulting operator is of full rank.

Figure~\ref{fig:prob-parallel} shows numerically found probability of finding a
positive operator in the subspace defined in
Corollary~\ref{cor:Duan_parallel_distinguishability} as a function of the number
of effects of POVMs. As we can see in the figure, as the number of effects
increases, so does this probability. This is in agreement with previous results.
On the one hand for the case when $m=d$, we have a von Neumann POVM and we have
shown~\cite{puchala2018multiple} that such measurements can always be
distinguished in a finite number of steps in parallel. On the other hand, when
we have $m=d^2$, then $\SPAN{ \{\ketbra{x_i}{y_i}\}_i}$ gives the entire space
$\C^d$ and we are guaranteed to find the identity operator in it, which implies
that perfect distinguishability cannot be obtained for any finite number of
uses. In general, we are more likely to need to utilize the adaptive scheme for
perfect discrimination when the number of effects of POVMs is close to $d^2$.

\section{Conclusions}
In this work we studied the problem of discriminating POVMs with rank-one
effects. This class of measurements requires a more careful
approach compared to von Neumann POVMs or even general projective measurements.
We show that for this class of measurements there exist instances where the
adaptive discrimination scheme is optimal. This is contrary to von Neumann
POVMs, in which the parallel scheme is always optimal. We illustrate this result
using a special class of POVMs -- the symmetric information complete POVMs. For 
a given input dimension $d$ we study the discrimination between a SIC POVM and 
the same SIC POVM, but with permuted effects. This allows us to construct a 
simple example of two quantum channels which can be perfectly discriminated in 
a finite number of steps but require an adaptive scheme. We state  
the construction of such a discrimination scheme explicitly. Finally, we 
provide some numerical insight on the probability of randomly sampling a pair 
of POVMs with rank-one effects which can be distinguished only in the adaptive 
scenario.

\section*{Acknowledgments}
AK and ZP acknowledge financial support by the Foundation for Polish Science 
through TEAM-NET project (contract no. POIR.04.04.00-00-17C1/18-00). 
{\L}P acknowledges the support of the Polish National Science Centre under the 
project number 2016/22/E/ST6/00062.

\bibliographystyle{ieeetr}
\bibliography{sic_povm}
\end{document}